\let\svthefootnote\thefootnote
\newcommand\blankfootnote[1]{%
  \let\thefootnote\relax\footnotetext{#1}%
  \let\thefootnote\svthefootnote%
}
\let\svfootnote\footnote
\renewcommand\footnote[2][?]{%
  \if\relax#1\relax%
    \blankfootnote{#2}%
  \else%
    \if?#1\svfootnote{#2}\else\svfootnote[#1]{#2}\fi%
  \fi
}
\newtheorem{theorem}{Theorem}[section]
\newtheorem{lemma}[subsection]{Lemma}
\newtheorem{remark}[theorem]{Remark}
\newtheorem{definition}[theorem]{Definition}
\newtheorem{example}[theorem]{Example}
\begin{document}

\numberwithin{equation}{section}

\title{Static Einstein-Maxwell space-time invariant by translation}

\author{$^{\ast}$Benedito Leandro$^1$, Ana Paula de Melo$^2$, Ilton Menezes$^3$ and Romildo Pina$^4$}

\footnote[]{Universidade Federal de Goi\'as, IME, CEP 74690-900, Goi\^ania, GO, Brazil.}
\footnote[]{Email address: \textsf{$^{\ast}$bleandroneto@ufg.br$^1$, anapmelocosta@gmail.com$^2$, iltomenezesufg@gmail.com$^3$, romildo@ufg.br$^4$.}}
	 \footnote[]{Ana Paula de Melo was partially supported by PROPG-CAPES [Finance Code 001].}
	 \footnote[]{Ilton Menezes was partially supported by PROPG-CAPES [Finance Code 001].}
	 \footnote[]{Romildo Pina was partially supported by CNPq/Brazil [Grant number: 305410/2018-0].}

\date{}

\maketitle{}

\begin{abstract}
In this paper we study the static Einstein-Maxwell space when it is conformal to an $n$-dimensional pseudo-Euclidean space, which is invariant under the action of an $(n-1)$-dimensional translation group. We also provide a complete classification of such space. 
\end{abstract}

\vspace{0.2cm} \noindent \emph{2020 Mathematics Subject
Classification} : 53C20; 83C22; 53C21; 30F45. 

\vspace{0.4cm}\noindent \emph{Keywords}: Electrostatic system, conformal metric, Einstein-Maxwell equations.

\ 

\section{Introduction and Main Results}

 \paragraph{} We can classify black holes by three aspects: mass, electric charge and angular momentum (at least in the static case). From a physical point of view, the most interesting and well-known solutions for Einstein's Equations, which represent a physical model for a black hole, are: Schwarzschild, Reissner-Nordstr\"om and Kerr-Newman space-times. The first one was also the first nontrivial solution exact solution for the Einstein's equation and models a black hole with mass that do not rotate and has no electric charge. The Kerr-Newman solution is a model for a black hole with mass, electric charge and angular momentum. Since it is expected that a black hole rotates, given the nature and final state of a collapsing star, this is a more natural solution to consider given its properties. Nevertheless,  the Reissner-Nordstr\"om solution represents a class of black holes which carries mass and electric charge, and
thus can be very useful in the understanding of the theory of general relativity.
 
The latter solution corresponds to a model for a static black hole with electric charged $q$ and mass $m$, which is spherically symmetric and conformally flat. The spatial factor for this space-time is defined by the Riemannian manifold $M^{n}=\mathbb{S}^{n-1}\times (r^{+},\,+\infty)$ with metric tensor
$$g=\frac{dr^{2}}{1-mr^{2-n}+q^{2}r^{2(2-n)}}+r^{2}g_{\mathbb{S}^{n-1}},$$
where $m>2q>0$ are constants (subextremal), and $r^{+}$ is defined as the larger of the two solutions of the equation $1-mr^{2-n}+q^{2}r^{2(2-n)}=0.$
The Reissner-Nordstr\"om manifold is one of the most relevant solutions for the electrostatic system (or Einstein-Maxwell equations), which we will define below (cf. \cite{cederbaum2016,chrusciel,sophia,yaza2015} and the references therein). The static Einstein-Maxwell space-time allows the existence of two distinct black holes in equilibrium, due to the electric charge. This is an important distinction from static vacuum Einstein space-times. The electrostatic system generalizes the static vacuum Einstein equations. We are considering there are no magnetic fields in the system.

\begin{definition}\label{def1} Let $(M^n, g)$, $n \geq 3$, be a semi-Riemannian manifold and let $N: M^n\rightarrow \mathbb R_{>0}$, $\psi: M^n\rightarrow \mathbb R$ be smooth functions such that
\begin{equation}\label{001}
\Delta N =\widehat{C}^{2}\frac{|\nabla\psi|^2}{N},
\end{equation}
\begin{equation}\label{002}
\operatorname{div} \left(\frac{\nabla\psi}{N}\right)=0,
\end{equation}
and
\begin{equation}\label{003}
N\operatorname{Ric} =\nabla^2 N-2\frac{\nabla\psi\otimes \nabla\psi}{N}+\frac{2}{(n-1)N}|\nabla\psi|^2g,
\end{equation}
where $\widehat{C}^{2}=2\frac{(n-2)}{(n-1)}$. Then $(M^n,\, g,\, N,\, \psi)$ is called an \emph{electrostatic system} (or static Einstein-Maxwell equations). 
Here $\nabla$, $\Delta$ and $\nabla^2$ stand for the covariant derivative, the Laplacian and the Hessian operator with respect to $g$. Moreover, $\operatorname{Ric}$ and $\operatorname{div}$ are the Ricci tensor and the divergence for $g$. The smooth functions $N$, $\psi$ and $M^{n}$ are called \textit{lapse  function}, \textit{electric potential} and spatial factor (in the Riemannian case) for the static Einstein-Maxwell space-time, respectively. 
\end{definition}

It is important to notice that if $\psi$ is constant everywhere, then we get the {\it static vacuum Einstein equations}, for which the standard Schwarzschild space is the most important solution  (see \cite{santos} for instance). Contracting equation \eqref{003} and combining it with \eqref{001},  we get
\begin{equation}\label{1.4}
N^2R=2|\nabla \psi|^2,
\end{equation}
where $R$ denotes the scalar curvature of $(M^n,\, g)$ (for more details see \cite{cederbaum2016,chrusciel,sophia}).

The most complete result there exists in the literature so far concerning the classification of electrostatic spaces are due Chru\'sciel and Tod \cite{chrusciel2007}, in the three dimensional case and under some asymptotic conditions.

 Our goal is to provide a full classification for the solutions of static Einstein-Maxwell equations when $M^n$ is conformal to a pseudo-Euclidean space and invariant under the action of an $(n-1)$-dimensional translation group. To do that, let $(\Omega, \, g)$ be the standard pseudo-Euclidean space with metric $g$ and Cartesian coordinates $(x_1, ..., x_n)$ in which $g_{ij}=\delta_{ij}\varepsilon_i$, $1\leq i,j\leq n$. Here, $\delta_{ij}$ is the Kronecker delta and $\varepsilon_{i}=\pm1$. Let $\xi=\sum_{i=1}^{n}\alpha_{i}x_{i}$, $\alpha_{i}\in\mathbb{R}$ be a basic invariant for an $(n-1)$-dimensional translation group. We obtain necessary and sufficient conditions for smooth functions $\varphi$, $N$ and $\psi$ defined in $\Omega\subseteq\mathbb{R}^n$, an open subset, to be invariant by the action of an $(n-1)$-dimensional translation group so that $ (\Omega, \, \bar{g},\,N,\,\psi)$ be a solution for the electrostatic system with metric $\bar{g}=\frac{g}{\varphi^2}$. Moreover, we show that these necessary and sufficient conditions depend on the
direction of $\alpha=\displaystyle\sum_{i=1}^{n}\alpha_i \partial_{x_i}$ (i.e., being lightlike or not).

This method was successfully used before to provide a complete classification for Solitons, Static metrics and quasi-Eisntein manifolds. In \cite{barbosa}, Barbosa, Pina and Tenenblat showed a complete classification for the gradient Ricci solitons. Moreover, they proved there are infinitely many gradient Ricci solitons conformal to the standard pseudo-Euclidean metric. For the gradient Yamabe solitons, a similar result was proved by Leandro and Tenenblat \cite{Be1}, where a geodesically complete example of Yamabe soliton was provided. Leandro and dos Santos \cite{Be2} proved the most general invariant that reduces the gradient Ricci solitons PDE into a system of ODE. Furthermore, they built new examples in the two dimensional case. The idea of finding the most general invariant to reduce the Ricci solitons was also used by dos Santos and Leandro in \cite{santos}, where the authors were able to recover the traditional Tangherlini-Schwarzschild solution. Recently, Pina and dos Santos \cite{pina} proved all group-invariants for the Einstein equation and the Ricci equation. Ribeiro Jr and Tenenblat \cite{ribeiro} provided a complete classification for quasi-Einstein metrics conformal to a Euclidean space and invariant by translations.

It is important to highlight that we are considering the static Einstein-Maxwell space conformal to an $n$-dimensional pseudo-Euclidean space, which is invariant under the action of an $(n-1)$-dimensional translation group. Therefore, our solutions, when we consider the Riemannian case for the spatial factor, can not be geodesically complete, since the conformal function is not defined for all values of its parameter. Furthermore, the solutions presented in this work must be different from the Reissner-Nordstr\"om manifold since it is a conformally flat and rotationally symmetric solution for the electrostatic system (cf. \cite{chrusciel}). Still, from a geometric point of view it is important to obtain as much examples as possible, for a better understanding of the geometry and topology of solutions for the electrostatic system.

The classification of electrostatic solutions for the Einstein's equation was already provided in the literature for the Riemannian case, considering some suitable initial boundary conditions (cf.\cite{cederbaum2016,chrusciel,sophia} and the references therein). We can follow, for instance, the steps of the proof given by \cite{bunting} to conclude that an asymptotically flat static Einstein-Maxwell space must be the Reissner-Nordstr\"om manifold (cf. \cite{sophia} for a full description of this method). It is important to say that the Reissner-Nordstr\"om manifold is spherically symmetric, and so our solutions must be completely different from it. Moreover, we are considering the manifold with a semi-Riemannian geometry. Thus, our solutions will be more general, in the sense that even the topology of our solutions can be different.

In what follows, we will use the following convention for the  derivatives of a function $F=F(\xi)$, where $\xi: \Omega\subseteq\mathbb{R}^n\rightarrow\mathbb{R}$:
\ 
\
\
\begin{center}
    $\frac{d F}{d\xi}=F'$\quad\mbox{and}\quad
    $\frac{\partial F}{\partial x_i\partial x_j}=F_{,ij}$.
\end{center}

 \begin{theorem}\label{teo1}
Let $\left(\mathbb{R}^{n},g\right)$, $\, n\geq 3$, be a pseudo-Euclidean space with Cartesian coordinates $x=\left(x_{1},...\, ,x_{n}\right)$ and metric components $g_{ij}=\delta_{ij}\varepsilon_{i}$, $1\leq i,j\leq n$, where $\delta_{ij}$ is the Kronecker delta and $\varepsilon_{i}=\pm1$. Then there exists a metric $\bar{g}=g/ \varphi^{2}$ such that $\left(\Omega,\,\bar{g},\,N,\,\psi\right)$, $\Omega\subseteq\mathbb{R}^{n}$, is a solution for the electrostatic system (Definition \ref{def1}) if, and only if, smooth functions $\varphi$, $\psi$ and $N$ satisfy 

\begin{equation}\label{004}
(n-2)N\varphi_{,ij}-\varphi N_{,ij}-\varphi_{,i}N_{,j}-\varphi_{,j}N_{,i}+2\frac{\varphi}{N}\psi_{,i}\psi_{,j}=0,\hspace{0,8cm} \mathrm{for} \hspace{0,2cm} i\neq j;
\end{equation}
and for each $i$
\begin{eqnarray}\label{005}
&&\varphi\left[(n-2)N\varphi_{,ii}-\varphi N_{,ii}-2\varphi_{,i}N_{,i}+2\frac{\varphi}{N}\left(\psi_{,i}\right)^2 \right]\nonumber\\
&&+\varepsilon_{i}\sum_{k=1}^{n}\varepsilon_{k}\left[\varphi\varphi_{,kk}N+\varphi\varphi_{,k}N_{,k}-(n-1)N\left(\varphi_{,k}\right)^2-\frac{2}{(n-1)N}\varphi^2\left(\psi_{,k}\right)^2\right]=0;
\end{eqnarray}

\begin{equation}\label{19}
\sum_{k=1}^{n}\varepsilon_k\{N\varphi \psi_{,kk}-(n-2)N\varphi_{,k}\psi_{,k}-\varphi\psi_{,k}N_{,k}\}=0;
\end{equation}

\begin{equation}\label{018}
\sum_{k=1}^{n}\varepsilon_k\{\varphi NN_{,kk} -(n-2)N \varphi_{,k}N_{,k}-\widehat C^2\varphi \left(\psi_{,k}\right)^2\}=0.
\end{equation}
\end{theorem}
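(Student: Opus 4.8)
The plan is to rewrite each of the three governing equations \eqref{001}, \eqref{002} and \eqref{003} of Definition~\ref{def1}---all computed for the conformal metric $\bar g = g/\varphi^2$---in the flat Cartesian coordinates $(x_1,\dots,x_n)$, where the Levi-Civita connection of $g$ is trivial and every covariant derivative reduces to an ordinary partial derivative. Writing $\bar g = e^{2f}g$ with $f=-\log\varphi$, I would first assemble the standard conformal transformation identities, in which $\langle\cdot,\cdot\rangle$, $\Delta$ and $|\cdot|$ denote the inner product, Laplacian and norm of the flat metric $g$. Since $\operatorname{Ric}(g)=0$, the conformal Ricci formula becomes
\begin{equation*}
\overline{\operatorname{Ric}}_{ij} = -(n-2)\bigl(f_{,ij}-f_{,i}f_{,j}\bigr) - \bigl(\Delta f + (n-2)|\nabla f|^2\bigr)g_{ij},
\end{equation*}
and substituting $f=-\log\varphi$ collapses $f_{,ij}-f_{,i}f_{,j}$ to $-\varphi_{,ij}/\varphi$. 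In the same way I would record the transformation of the Hessian,
\begin{equation*}
\bar\nabla^2_{ij}N = N_{,ij}-\bigl(f_{,i}N_{,j}+f_{,j}N_{,i}\bigr)+\langle\nabla f,\nabla N\rangle\,g_{ij},
\end{equation*}
together with the auxiliary facts $\bar g^{ij}=\varphi^2\varepsilon_i\delta^{ij}$, $|\bar\nabla\psi|^2=\varphi^2\sum_k\varepsilon_k(\psi_{,k})^2$, and $\sqrt{|\det\bar g|}=\varphi^{-n}$, the last of which is what makes the divergence and Laplacian of $\bar g$ computable in closed form.

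Second, I would feed these formulas into the tensorial equation \eqref{003} and split it into its off-diagonal and diagonal components. For $i\neq j$ every term carrying the factor $g_{ij}$ vanishes, so only the ``local'' contributions survive; after clearing a factor $1/\varphi$ this yields exactly \eqref{004}. For the diagonal entries the coefficient $g_{ii}=\varepsilon_i$ activates three distinct trace-type pieces: $\Delta f+(n-2)|\nabla f|^2$ from the Ricci term, $\langle\nabla f,\nabla N\rangle$ from the Hessian, and the gradient term $\tfrac{2}{(n-1)N}|\bar\nabla\psi|^2\,\bar g_{ii}$. Multiplying through by $\varphi^2$ and regrouping the index-$i$ terms separately from the summed terms of the form $\varepsilon_i\sum_k\varepsilon_k[\cdots]$ produces precisely \eqref{005}.

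Third, I would handle the two scalar equations by the same mechanism. Using $\operatorname{div}_{\bar g}X=\varphi^n\sum_i\partial_i(\varphi^{-n}X^i)$ with $X=\bar\nabla\psi/N$, whose components are $\varphi^2\varepsilon_i\psi_{,i}/N$, equation \eqref{002} expands and, after multiplication by $N^2/\varphi$, becomes \eqref{19}. Applying the identical computation to $\bar\Delta N=\operatorname{div}_{\bar g}(\bar\nabla N)$ turns \eqref{001}, namely $\bar\Delta N=\widehat C^2|\bar\nabla\psi|^2/N$, into \eqref{018} after multiplication by $N/\varphi$. Because every transformation rule invoked is an exact identity in the smooth category, each of these steps is reversible; hence the system \eqref{004}--\eqref{018} holds if and only if \eqref{001}--\eqref{003} hold, which is the asserted equivalence.

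The computations are elementary but bookkeeping-intensive, and I expect the diagonal case of \eqref{003} to be the main obstacle: it is the only place where the three separate sources of trace terms listed above must be collected and shown to assemble into the single summed bracket $\varepsilon_i\sum_k\varepsilon_k[\,\varphi\varphi_{,kk}N+\varphi\varphi_{,k}N_{,k}-(n-1)N(\varphi_{,k})^2-\tfrac{2}{(n-1)N}\varphi^2(\psi_{,k})^2\,]$, with the correct power of $\varphi$ and the correct sign on each $\varepsilon_k$-weighted contribution. Keeping careful track of the factor $\varphi$ that remains in the denominators of the Ricci and Hessian terms, and choosing the multiplier $\varphi^2$ so that the whole equation becomes polynomial and homogeneous of degree two in $\varphi$, is what makes the matching with \eqref{005} transparent.
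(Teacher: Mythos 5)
Your proposal is correct and follows essentially the same route as the paper: both split the tensorial equation \eqref{003} into its off-diagonal and diagonal components using the standard conformal transformation rules for $\operatorname{Ric}$ and the Hessian, and both convert \eqref{001} and \eqref{002} via the conformal Laplacian/divergence to obtain \eqref{19} and \eqref{018}. The only differences are cosmetic bookkeeping choices---you parametrize the conformal factor as $e^{2f}$ with $f=-\log\varphi$ and use the density form $\operatorname{div}_{\bar g}X=\varphi^{n}\sum_i\partial_i(\varphi^{-n}X^i)$, whereas the paper works with $\varphi$ directly via the Christoffel symbols of $\bar g$ and the expansion $N\Delta_{\bar g}\psi-\bar g(\nabla_{\bar g}\psi,\nabla_{\bar g}N)=0$.
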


We aim to find solutions for the PDE's \eqref{004}, \eqref{005}, \eqref{19} and \eqref{018} of the form $\varphi(\xi), \psi(\xi)$ and $N(\xi)$, where $\xi=\sum_{i=1}^{n}\alpha_{i}x_{i}$, $\alpha_{i}\in\mathbb{R}$, is an invariant for an $ (n-1)$-dimensional translation group with $\sum_{i=1}^n\varepsilon_i\alpha_i^2=\varepsilon_{i_0}$, where $\varepsilon_{i_0}\in \{-1,\,1,\,0\}$ (timelike, spacelike or lightlike).

 The next result reduces the system of partial differential equations from the previous theorem to an equivalent system of ordinary differential equations.

\begin{theorem}\label{teo2}
Let $\left(\mathbb{R}^{n},g\right)$, $n\geq 3$, be a pseudo-Euclidean space with Cartesian coordinates $x=\left(x_{1},...,x_{n}\right)$ and metric components $g_{ij}=\delta_{ij}\varepsilon_{i}$, $1\leq i,j \leq n$, where $\varepsilon_{i}=\pm 1$. Consider smooth functions $\varphi(\xi), \psi(\xi)$ and $N(\xi)$, where $\xi=\sum_{i=1}^{n}\alpha_{i}x_{i}$, $\alpha_{i}\in\mathbb{R}$ and $\sum_{i=1}^n\varepsilon_i\alpha_i^2=\varepsilon_{i_0}$. Then there exists a metric $\bar{g}=g/\varphi^2$ such that $\left(\Omega,\,\bar{g},\,N,\,\psi\right)$, $\Omega\subseteq\mathbb{R}^{n}$,
is a solution for the electrostatic system (Definition \ref{def1}) if, and only if, functions $\varphi$, $\psi$ and $N$ satisfy
\vspace{12pt}
\begin{align}\label{3}
(n-2)\varphi''N-\varphi N''-2\varphi'N'+\frac{2\varphi}{N}\left(\psi'\right)^2=0;
\end{align}

\begin{align}\label{4}
\varepsilon_{i_0}\left\{\varphi\varphi''N-(n-1)N\left(\varphi'\right)^2+\varphi\varphi'N'-\frac{2\varphi^2\left(\psi'\right)^2}{(n-1)N}\right\}=0;
\end{align}

\begin{align}\label{20}
\varepsilon_{i_0}\left\{\varphi N\psi''-(n-2)\varphi'N\psi'-\varphi N'\psi'\right\}=0;
\end{align}

\begin{align}\label{21}
\varepsilon_{i_0}\left\{\varphi NN''-(n-2)\varphi'NN'-\widehat{C}^2\varphi\left(\psi'\right)^2\right\}=0.
\end{align}

Moreover, if  $\sum_{i=1}^n\varepsilon_i\alpha_i^2=0$, then
\begin{align*}
(n-2)\varphi''N-\varphi N''-2\varphi'N'+\frac{2\varphi}{N}\left(\psi'\right)^2=0.
\end{align*}
\end{theorem}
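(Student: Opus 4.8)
The plan is to reduce, via Theorem \ref{teo1}, the claim to the equivalence between the PDE system \eqref{004}--\eqref{018} evaluated on functions of $\xi$ alone and the ODE system \eqref{3}--\eqref{21}, and then to run this equivalence in both directions by a single substitution. First I would record the chain-rule identities that govern the ansatz: since $\xi=\sum_{i}\alpha_i x_i$, every $F=F(\xi)$ satisfies $F_{,i}=\alpha_i F'$ and $F_{,ij}=\alpha_i\alpha_j F''$. Substituting these for $\varphi,\psi,N$ turns each coefficient appearing in \eqref{004}--\eqref{018} into a monomial in the $\alpha_i$ multiplying a function of $\xi$, which is what makes the separation into ODEs possible.

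The two divergence-type identities \eqref{19} and \eqref{018} are the most transparent: after substitution each summand carries the common factor $\alpha_k^2$, so the whole sum factors as $\left(\sum_k\varepsilon_k\alpha_k^2\right)$ times an expression in $\xi$. Using the normalization $\sum_k\varepsilon_k\alpha_k^2=\varepsilon_{i_0}$, this is precisely \eqref{20} and \eqref{21}. For the remaining two I would introduce shorthand: let $E_3$ be the left-hand side of \eqref{3} and $E_4$ the bracket in \eqref{4}. Then \eqref{004} becomes $\alpha_i\alpha_j E_3=0$ for $i\neq j$, while \eqref{005} becomes $\varphi\,\alpha_i^2 E_3+\varepsilon_i\varepsilon_{i_0}E_4=0$ for each $i$ — the crucial recognition being that the first bracket of \eqref{005} equals $\alpha_i^2\varphi E_3$ and that the $k$-sum in the second bracket collapses, again through $\sum_k\varepsilon_k\alpha_k^2=\varepsilon_{i_0}$, to $\varepsilon_{i_0}E_4$.

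To finish the forward direction I would extract $E_3=0$ and $\varepsilon_{i_0}E_4=0$ by a short case split on how many $\alpha_i$ are nonzero. If at least two are nonzero, some $\alpha_i\alpha_j\neq0$, so \eqref{004} gives $E_3=0$, and then \eqref{005} forces $\varepsilon_i\varepsilon_{i_0}E_4=0$. If only one coefficient $\alpha_i$ is nonzero — which can occur only when $\varepsilon_{i_0}=\pm1$ — then \eqref{005} taken at an index where $\alpha$ vanishes (one exists because $n\geq3$) yields $E_4=0$, and \eqref{005} at the nonzero index then gives $E_3=0$ after dividing by $\varphi\,\alpha_i^2\neq0$ (recall $\varphi\neq0$ since $\bar g=g/\varphi^2$, and $N>0$). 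The converse direction needs no new work: every displayed PDE was rewritten as an explicit $\alpha$-multiple of an ODE, so assuming \eqref{3}--\eqref{21} and reversing the substitution makes each of \eqref{004}--\eqref{018} hold identically.

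For the concluding \textbf{Moreover} assertion I would observe that when $\sum_i\varepsilon_i\alpha_i^2=0$ the factor $\varepsilon_{i_0}$ annihilates \eqref{4}, \eqref{20} and \eqref{21}, so only \eqref{3} carries content; and since a lightlike invariant cannot have a single nonzero coefficient (a lone term $\varepsilon_j\alpha_j^2$ cannot vanish), there is always a pair with $\alpha_i\alpha_j\neq0$, whence \eqref{004} still delivers $E_3=0$. The main obstacle I anticipate is not the algebra but the bookkeeping in this case split: ensuring that \eqref{3} is recovered uniformly whether or not $\varepsilon_{i_0}=0$, and that the degenerate single-coefficient configuration is handled without ever dividing by a quantity that may vanish.
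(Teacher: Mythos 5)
Your proposal is correct and follows essentially the same route as the paper: substituting $F_{,i}=\alpha_iF'$, $F_{,ij}=\alpha_i\alpha_jF''$ into the PDEs of Theorem \ref{teo1}, factoring out the $\alpha$-monomials against the normalization $\sum_k\varepsilon_k\alpha_k^2=\varepsilon_{i_0}$, and splitting into the cases of at least two nonzero $\alpha_i$ versus a single nonzero coefficient. Your treatment of the degenerate and lightlike cases is in fact slightly more explicit than the paper's (which dispatches the lightlike case with "it is easy to check"), but the underlying argument is identical.
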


The following theorem will show that there are infinitely many solutions for electrostatic systems, which are invariant under the action of an $(n-1)$-dimensional translation group acting on $\Omega\subseteq\mathbb{R}^n$, when $\alpha=\displaystyle\sum_{i=1}^{n}\alpha_{i}\partial_{x_i}$ is a null
vector (i.e., $\sum_{i=1}^n\varepsilon_i\alpha_i^2=0)$. 

\begin{theorem}\label{theorem 6} Let $\varphi(\xi)$ and $N(\xi)$ be any nonvanishing differentiable functions, where  $\xi=\sum_{i=1}^{n}\alpha_{i}x_{i}$ and $\sum_{i=1}^n\varepsilon_i\alpha_i^2=0$. Then the function $\psi(\xi)$ given by:
	\begin{equation}\label{ps}
	 \psi(\xi)= \pm \int \sqrt{N \left( \frac{N''}{2} + \frac{\varphi' \, N'}{\varphi} - (n-2) \frac{N \, \varphi''}{2\varphi} \right)} \, \, d\xi+ c,
	\end{equation}
	where $c \in \mathbb{R}$, satisfies \eqref{3}. Moreover,  $(\Omega,\,\bar{g},\,N,\,\psi)$, in which $\Omega\subseteq\mathbb{R}^{n}$ and $\bar{g}=\varepsilon_i\delta_{ij} /\varphi^{2}$, is a solution for the electrostatic system. All such spaces given by these solutions have zero scalar curvature.
\end{theorem}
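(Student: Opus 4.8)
The plan is to derive everything from the ODE reduction in Theorem \ref{teo2} together with the trace identity \eqref{1.4}. First I would note that in the lightlike case the coefficient $\varepsilon_{i_0}=\sum_{i=1}^{n}\varepsilon_i\alpha_i^2$ equals zero, so that equations \eqref{4}, \eqref{20} and \eqref{21} are automatically satisfied for every choice of $\varphi$, $N$ and $\psi$, each carrying the overall factor $\varepsilon_{i_0}=0$. By Theorem \ref{teo2}, the quadruple $(\Omega,\bar{g},N,\psi)$ is therefore a solution of the electrostatic system if and only if the single remaining equation \eqref{3} holds; this is exactly the content of the ``Moreover'' clause of that theorem, since \eqref{3} carries no factor of $\varepsilon_{i_0}$.

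Next I would solve \eqref{3} algebraically for $(\psi')^2$. Since $\varphi$ and $N$ are nonvanishing, \eqref{3} is equivalent to
\begin{align*}
(\psi')^2 = N\left(\frac{N''}{2}+\frac{\varphi' N'}{\varphi}-(n-2)\frac{N\varphi''}{2\varphi}\right).
\end{align*}
Taking square roots and integrating produces precisely formula \eqref{ps}; conversely, differentiating \eqref{ps} and substituting back shows that this $\psi$ solves \eqref{3}. Combined with the first step, this already yields that $(\Omega,\bar{g},N,\psi)$ is a solution of the electrostatic system. The one point I would flag here is that \eqref{ps} defines a real, differentiable $\psi$ only on the open set $\Omega$ where the radicand is nonnegative; since $\varphi$ and $N$ are otherwise arbitrary, this is exactly where the restriction on the domain $\Omega$ must be imposed.

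For the scalar-curvature statement I would appeal to the identity \eqref{1.4}, namely $N^2R=2|\nabla\psi|^2$, with the norm taken with respect to $\bar{g}$. Because $\psi$ depends only on $\xi=\sum_{i=1}^{n}\alpha_i x_i$, we have $\psi_{,i}=\psi'\alpha_i$, and since $\bar{g}^{ij}=\varphi^2\varepsilon_i\delta_{ij}$ I would compute
\begin{align*}
|\nabla\psi|^2=\bar{g}^{ij}\psi_{,i}\psi_{,j}=\varphi^2(\psi')^2\sum_{i=1}^{n}\varepsilon_i\alpha_i^2=0,
\end{align*}
the last equality being the lightlike hypothesis. As $N>0$, this forces $R=0$, which is the claimed vanishing of the scalar curvature.

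I expect the only genuine subtlety to be the domain question raised in the second step: for arbitrary $\varphi$ and $N$ the expression under the radical in \eqref{ps} need not be globally nonnegative, so $\Omega$ has to be chosen as a region on which it is. Beyond this, the argument is essentially bookkeeping, since the lightlike condition collapses three of the four reduced equations and the trace identity \eqref{1.4} delivers the curvature conclusion at once.
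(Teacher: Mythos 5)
Your proposal is correct and follows essentially the same route as the paper, whose proof simply deduces the theorem from Theorem \ref{teo2}: in the lightlike case the reduced system collapses to the single equation \eqref{3}, which you solve for $(\psi')^2$ to obtain \eqref{ps}, and the vanishing of the scalar curvature follows from \eqref{1.4} since $|\nabla_{\bar g}\psi|^2=\varphi^2(\psi')^2\sum_{i=1}^{n}\varepsilon_i\alpha_i^2=0$. Your explicit treatment of the domain restriction (nonnegativity of the radicand) and of the curvature computation fills in details the paper leaves implicit, but the underlying argument is the same.
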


Next, we provide a necessary condition satisfied by the conformal factor $\varphi$ in order that $\left(\Omega,\,\bar{g},\,N,\,\psi\right)$, $\Omega\subseteq\mathbb{R}^{n}$, be a solution for the electrostatic system with $\bar{g}=g/\varphi^2$.

\begin{theorem}\label{theorem 7}
    Let $\left(\mathbb{R}^{n},g\right)$, $n\geq 3$, be a pseudo-Euclidean space with Cartesian coordinates $x=\left(x_{1},...,x_{n}\right)$ and metric components $g_{ij}=\delta_{ij}\varepsilon_{i}$, $1\leq i,j \leq n$, where $\varepsilon_{i}=\pm 1$. Consider smooth functions $\varphi(\xi), \psi(\xi)$ and $N(\xi)$, where $\xi=\sum_{i=1}^{n}\alpha_{i}x_{i}$, $\alpha_{i}\in\mathbb{R}$ and $\sum_{i=1}^n\varepsilon_i\alpha_i^2=\varepsilon_{i_0}$, with $\varepsilon_{i_0} \in \{-1,1\}$. Then, if $\left(\Omega,\,\bar{g},\,N,\,\psi\right)$, $\Omega\subseteq\mathbb{R}^{n}$, is a solution for the electrostatic system
\begin{equation}\label{edophi}
     \varphi^{2}\varphi'''  - 3(n-1) \varphi\varphi' \varphi''+ n(n-1)(\varphi')^{3} = 0.
\end{equation}
Moreover,
    \begin{eqnarray}\label{Npsidependephi}
    N= k \frac{\varphi'}{\varphi}\quad\mbox{and}\quad
     \psi= k_1 \frac{\varphi^{n-2}}{n-2} + k_2.
    \end{eqnarray} 
Here, $k, k_1 \in \mathbb{R}\backslash\{0\}$, $k_2\in\mathbb{R}$ and $\bar{g}=g/\varphi^2$.
\end{theorem}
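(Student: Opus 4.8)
The plan is to work in the non-null case $\varepsilon_{i_0}\in\{-1,1\}$, where Theorem \ref{teo2} lets me cancel the common factor $\varepsilon_{i_0}$ and treat \eqref{3}, \eqref{4}, \eqref{20} and \eqref{21} as four ODEs in $\varphi$, $N$, $\psi$. I will assume the genuinely charged regime $\psi'\not\equiv 0$ (the complementary case $\psi'\equiv 0$ is the static-vacuum subcase and forces $k_1=0$, which is excluded here). On the open set where $\psi'\neq 0$, equation \eqref{20} rearranges to $\psi''/\psi'=(n-2)\varphi'/\varphi+N'/N$, which integrates to $\psi'=A\,\varphi^{\,n-2}N$ for a constant $A\neq 0$. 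Substituting this expression for $(\psi')^2$ into \eqref{3}, \eqref{4} and \eqref{21} eliminates $\psi$ and leaves three ODEs in $\varphi$ and $N$ carrying the parameter $A$.

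Next I would isolate the relation between $N$ and $\varphi$. Adding the reduced forms of \eqref{3} and \eqref{21} cancels the $N''$ terms and produces $(n-2)\varphi''N-n\varphi'N'+\tfrac{2}{n-1}A^2\varphi^{2n-3}N=0$; using the reduced form of \eqref{4} to eliminate the charge term $A^2\varphi^{2n-3}N$ then collapses everything to the first-order linear equation $\varphi''N-\varphi'N'-\frac{(\varphi')^2}{\varphi}N=0$. Since $\varphi\neq 0$ (so that $\bar g=g/\varphi^2$ is defined) and $N>0$, on the locus $\varphi'\neq 0$ this reads $N'/N=\varphi''/\varphi'-\varphi'/\varphi$ and integrates to $N=k\,\varphi'/\varphi$ with $k\neq 0$, the first formula in \eqref{Npsidependephi}. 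Feeding this back into $\psi'=A\varphi^{n-2}N=Ak\,\varphi^{n-3}\varphi'$ and integrating gives $\psi=k_1\frac{\varphi^{n-2}}{n-2}+k_2$ with $k_1=Ak\neq 0$, the second formula in \eqref{Npsidependephi}.

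It remains to produce the third-order ODE \eqref{edophi} for $\varphi$ alone, and here I would again eliminate the charge term rather than substitute blindly (substituting $N=k\varphi'/\varphi$ directly into \eqref{21} leaves the high-power term $A^2\varphi^{2n-3}N$, whose powers of $\varphi$ do not match those coming from $N''$). Combining the reduced \eqref{4} with the relation just found lets me write $A^2\varphi^{2n-3}N=(n-1)\varphi'N'-\frac{(n-1)(n-2)}{2}\frac{(\varphi')^2}{\varphi}N$; inserting this into the reduced \eqref{21} and using the identity $\widehat{C}^2=2(n-2)/(n-1)$ to collapse the coefficients yields an equation free of $A$, namely $\varphi N''-3(n-2)\varphi'N'+(n-2)^2\frac{(\varphi')^2}{\varphi}N=0$. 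Finally I substitute $N=k\varphi'/\varphi$ together with the computed $N'=k(\varphi''/\varphi-(\varphi')^2/\varphi^2)$ and $N''=k(\varphi'''/\varphi-3\varphi'\varphi''/\varphi^2+2(\varphi')^3/\varphi^3)$; after dividing by $k$ and multiplying by $\varphi^2$, the three coefficients collect to exactly $\varphi^2\varphi'''-3(n-1)\varphi\varphi'\varphi''+n(n-1)(\varphi')^3=0$, which is \eqref{edophi}.

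The conceptual core is the choice of the two combinations that kill first $N''$ (giving $N=k\varphi'/\varphi$) and then the charge term $A^2\varphi^{2n-3}N$ (isolating the ODE for $\varphi$); once these are identified, the rest is bookkeeping. I expect the main obstacle to be purely computational: expanding $N''$ in terms of $\varphi,\varphi',\varphi'',\varphi'''$ and verifying that the coefficients of $\varphi'\varphi''/\varphi$ and $(\varphi')^3/\varphi^2$ simplify precisely to $-3(n-1)$ and $n(n-1)$. I would also be explicit about the loci $\varphi'=0$ and $\psi'=0$: the stated classification, with $k,k_1\neq 0$, lives on the open dense set where these do not vanish, while $\psi'\equiv 0$ recovers the separate static-vacuum branch that the hypothesis $k_1\neq 0$ excludes.
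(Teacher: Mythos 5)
Your proof is correct, and it takes a genuinely different elimination route from the paper's. The paper first establishes Lemma \ref{lemma 1} --- the reduced form \eqref{23} of the trace identity $N^2R=2|\nabla\psi|^2$, derived separately from the conformal scalar-curvature formula --- and obtains $N=k\varphi'/\varphi$ by combining \eqref{23} with \eqref{4}; only then does it integrate \eqref{20} (already using the known form of $N$), and it extracts \eqref{edophi} through a chain of combinations of \eqref{3}, \eqref{4}, \eqref{21} and \eqref{23}. You never invoke Lemma \ref{lemma 1}: you integrate \eqref{20} first, getting $\psi'=A\varphi^{n-2}N$ before anything is known about $N$, use this to eliminate $\psi$ from \eqref{3}, \eqref{4}, \eqref{21}, and then your two combinations (summing reduced \eqref{3} and \eqref{21} to kill $N''$, then using reduced \eqref{4} to kill the charge term) collapse to $\varphi''N-\varphi'N'-\frac{(\varphi')^2}{\varphi}N=0$, which is exactly the paper's intermediate equation for $N$. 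In effect you re-derive the content of \eqref{23} from inside Theorem \ref{teo2} (it is the combination \eqref{3} $+$ \eqref{21}$/N$ $+$ $n\,$\eqref{4}$/\varphi$), so your argument is self-contained at the ODE level, at the cost of carrying the parameter $A$ through the bookkeeping; I checked your coefficient computations (the sum $(n-2)\varphi''N-n\varphi'N'+\tfrac{2}{n-1}A^2\varphi^{2n-3}N=0$, the expression for the charge term, and the final collection to $-3(n-1)$ and $n(n-1)$) and they are all right. Two pleasant byproducts of your route: your intermediate equation $\varphi N''-3(n-2)\varphi'N'+(n-2)^2\frac{(\varphi')^2}{\varphi}N=0$ becomes, under $f=\varphi'/\varphi=N/k$, precisely the Chini-type equation \eqref{edo_f} that the paper only encounters later in the proof of Theorem \ref{cor1}; and your explicit treatment of the loci $\psi'=0$ and $\varphi'=0$ is more careful than the paper's, which divides by $\psi'$ and $\varphi'$ without comment.
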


In what follows, we classify the static Einstein-Maxwell space when it is conformal to a pseudo-Euclidean space and invariant under the action of a translation group. 

\begin{theorem}\label{cor1}
		Let $\left(\mathbb{R}^{n},g\right)$, $n\geq3$, be a pseudo-Euclidean space with  Cartesian coordinates $x=\left(x_{1},...,x_{n}\right)$ and $g_{ij}=\delta_{ij}\varepsilon_{i}$, $1\leq i,j \leq n$, where $\varepsilon_{i}=\pm 1$. Consider smooth functions $\varphi(\xi), \psi(\xi)$ and $N(\xi)$, where $\xi=\sum_{i=1}^{n}\alpha_{i}x_{i}$, $\alpha_{i}\in\mathbb{R}$ and $\sum_{i=1}^{n}\varepsilon_{i}\alpha_i^2=\varepsilon_{i_0}$, with $\varepsilon_{i_0}=\pm1$. Then $(\Omega, \, \bar{g},\,N,\,\psi)$, $\Omega\subseteq\mathbb{R}^n$, is a solution for the electrostatic system with  $\bar{g}=g/\varphi^2$ if, and only if, functions $\varphi$, $\psi$ and $N$ satisfy
\begin{equation}\label{varphi}
  \varphi(\xi)=c_3[1- c_{1}^{2}[c_2+(n-2)\xi]^2]^{-1/(n-2)},
\end{equation}

\begin{equation}\label{psi}
   \psi(\xi)=\frac{k_1 c_3^{n-2}}{n-2}[1- c_{1}^{2}[c_2+(n-2)\xi]^2]^{-1} + k_2
\end{equation}

and

\begin{equation}\label{N}
     N\left(\xi\right)= \frac{2kc_{1}^{2}[c_2 + (n-2)\xi]}{1 - c_{1}^{2}[c_2+(n-2)\xi]^2},
\end{equation}
where $c_1^2= \frac{ k_1^{2}c_3^{2(n-2)}}{2(n-1)(n-2)k^{2}}$. Here $ k\neq0,\,k_1\neq0,\,k_2,\, c_1\neq0,\,c_2,\,c_3\neq0$ are constants. Moreover, these solutions are  defined in $\Omega=\{1 - c_{1}^{2}[c_2+(n-2)\xi]^2>0\}\bigcap\{2kc_{1}^{2}[c_2 + (n-2)\xi]>0\}$. Considering the Riemannian case (i.e., $\varepsilon_i=1$ for $1\leq i\leq n$), we can conclude that this family of solutions can not be geodesically complete.
\end{theorem}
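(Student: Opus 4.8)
The plan is to lean on Theorem \ref{theorem 7}, which in the non-null case $\varepsilon_{i_0}=\pm1$ already reduces the whole system to a single scalar ODE. For the ``only if'' direction, any solution forces $\varphi$ to satisfy the third-order equation \eqref{edophi} and forces $N=k\varphi'/\varphi$, $\psi=k_1\varphi^{n-2}/(n-2)+k_2$ via \eqref{Npsidependephi}. So the classification collapses to two tasks: integrate \eqref{edophi} in closed form to obtain \eqref{varphi}, and then read off \eqref{psi} and \eqref{N} by substitution. Conversely, for the ``if'' direction I must check that the displayed triple genuinely solves the ODE system \eqref{3}--\eqref{21} of Theorem \ref{teo2}, and this is where the relation among the constants will be pinned down.

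The core computation is the integration of \eqref{edophi}. Since that equation is homogeneous of degree three in $\varphi$ and autonomous in $\xi$, I would set $u=\varphi'/\varphi=(\log\varphi)'$, so that $N=ku$. Using $\varphi''=(u'+u^2)\varphi$ and $\varphi'''=(u''+3uu'+u^3)\varphi$, equation \eqref{edophi} collapses to the second-order autonomous equation $u''-3(n-2)uu'+(n-2)^2u^3=0$. Writing $u'=p(u)$ lowers the order to $p\,p_u-3(n-2)up+(n-2)^2u^3=0$, and the homogeneity suggests the substitution $p=v(u)\,u^2$, which separates to $v\,v_u\,u=-(2v-(n-2))(v-(n-2))$. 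Partial fractions and two successive integrations (first for $u$, then for $\varphi$ via $u=(\log\varphi)'$) yield $\varphi(\xi)=c_3[1-c_1^2(c_2+(n-2)\xi)^2]^{-1/(n-2)}$, which is \eqref{varphi}; here $c_2$ and $c_3$ are the translation and scaling freedoms of \eqref{edophi} and $c_1$ is the essential constant. Feeding this into \eqref{Npsidependephi} and simplifying produces \eqref{psi} and \eqref{N} verbatim.

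With the explicit functions in hand, the relation $c_1^2=k_1^2c_3^{2(n-2)}/[2(n-1)(n-2)k^2]$ is what forces the remaining equations to hold. Writing $w=c_2+(n-2)\xi$ and $D=1-c_1^2w^2$, I expect each of the diagonal equations \eqref{3}, \eqref{4}, \eqref{21} to carry a common factor of the form $w^jD^{-1/(n-2)-\ell}$; after dividing it out, the coefficients of the $w^2$-terms should cancel identically (this cancellation is precisely the content of \eqref{edophi}), leaving in every case the same single $\xi$-independent relation, which rearranges to the stated value of $c_1^2$. Equation \eqref{20} holds automatically, being the very integration that produced $\psi=k_1\varphi^{n-2}/(n-2)+k_2$ once $N=k\varphi'/\varphi$ is imposed. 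This simultaneous collapse to one scalar relation establishes both directions of the equivalence and explains why the constants $k,k_1,c_1,c_3$ are nonzero (otherwise $\psi$ is constant, $N$ degenerates, or $\varphi$ is undefined).

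Finally, the domain and the incompleteness claim are geometric bookkeeping. Positivity of $\varphi$, needed for $\bar g=g/\varphi^2$ to be a genuine conformal metric, requires $D>0$, and positivity of the lapse $N=2kc_1^2w/D$ requires $2kc_1^2w>0$; together these cut out $\Omega$. In the Riemannian case $\varepsilon_i=1$ one has $|\alpha|^2=1$, so along the straight ray in the direction $\alpha$ the invariant $\xi$ is an affine parameter and the $\bar g$-length up to the face $D\to0$ equals $\int\varphi^{-1}\,d\xi=c_3^{-1}\int(1-c_1^2w^2)^{1/(n-2)}\,d\xi$. The integrand vanishes like a positive power of the distance to that face, so the length is finite while $\varphi\to\infty$ there; the boundary thus sits at finite distance yet lies outside $\Omega$, and by Hopf--Rinow this precludes geodesic completeness. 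The main obstacle is the closed-form integration of the nonlinear third-order ODE \eqref{edophi}; once the substitution $u=\varphi'/\varphi$ exposes the separable structure, the remaining work is careful but routine.
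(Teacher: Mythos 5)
Your proposal is correct and follows essentially the same route as the paper: reduce via Theorem \ref{theorem 7} to the ODE \eqref{edophi} with $N=k\varphi'/\varphi$ and $\psi=k_1\varphi^{n-2}/(n-2)+k_2$, substitute $f=\varphi'/\varphi$ and $v(f)=f'$ to reach $vv_f-3(n-2)fv+(n-2)^2f^3=0$, separate via the $f^2$-weighted substitution, integrate, and then fix $c_1^2$ by checking the remaining equations of Theorem \ref{teo2} and Lemma \ref{lemma 1}. The only differences are cosmetic: where the paper invokes Kamke's Chini-equation invariant to justify the substitution $v=\mp(n-2)f^2u$, you derive the same substitution from homogeneity, and you additionally sketch the finite-distance argument for geodesic incompleteness, which the paper asserts without proof.
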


\begin{remark}
Here, we provide the curvature for Theorem \ref{cor1}. In the Appendix the reader will find the sectional curvature formulas to our settings. For Theorem \ref{cor1}, the sectional curvature is
\begin{eqnarray*}
K_{i\,j}=\frac{ 2c_{1}^{2}c_{3}^{2}\left\{c_{1}^{2}(c_{2}+(n-2)\xi)^{2}[n(\varepsilon_{i}\alpha_{i}^{2}+\varepsilon_{j}\alpha_{j}^{2})-2\varepsilon_{i_0}]+(n-2)(\varepsilon_{i}\alpha_{i}^{2}+\varepsilon_{j}\alpha_{j}^{2})\right\}}{[1- c_{1}^{2}[c_2+(n-2)\xi]^2]^{2(n-1)/(n-2)}}.
\end{eqnarray*}

Moreover, a straightforward computation (see \eqref{1.4} for instance) gives us the scalar curvature
\begin{eqnarray*}
R=\frac{4(n-1)(n-2)\varepsilon_{i_0}c_{1}^{2}c_{3}^{2}}{[1- c_{1}^{2}[c_2+(n-2)\xi]^2]^{2(n-1)/(n-2)}}.
\end{eqnarray*}
Thus, we can see that if $\xi\rightarrow \dfrac{-c_{2}c_{1}\pm1}{(n-2)c_1}$ the scalar curvature $R$ goes to infinity. Moreover, if $\xi\rightarrow\pm\infty$ then $R\rightarrow 0$.
\end{remark}

 It is important to remember that higher dimensional solutions for Einstein's equation called physicists' attention with the development of string theory. Therefore, solutions for the Einstein's equation with non conventional signatures can represent a step forward in the analysis of general relativity.

\section{Examples for the Einstein-Maxwell equations}

\begin{example}
This example illustrates Theorem \ref{theorem 6}. Let $\varphi(\xi)=\frac{\xi^2}{2}+1$ and $N(\xi)=k\xi^{\frac{n-2}{2}}$, where $k\in\mathbb{R}_{+}^{*}$, with $\xi=\sum_{i=1}^{n}\alpha_ix_i$ and $\sum_{i=1}^{n}\varepsilon_i\alpha_{i}^{2}=0$. Solving \eqref{ps} we obtain 
\begin{equation*}
    \psi(\xi)=\pm k\sqrt{\frac{(n-4)}{2(n-2)}}\xi^{\frac{n-2}{2}}+c,
\end{equation*}
where $c$ is constant. It follows from Theorem \ref{theorem 6} that $(\mathbb{R}^{n},\,g/\varphi^{2},\,N,\,\psi)$ is a solution for the electrostatic system. The sectional curvature (cf. Appendix) is given by 
\begin{equation*}
    K_{ij}=\left(\frac{\xi^2}{2}+1\right)\left(\varepsilon_j\alpha_j^2+\varepsilon_i\alpha_i^2\right).
\end{equation*}

For $n=4$ we get a solution for static vacuum equations, i.e., $\psi=0$. Therefore, in that case we recover a static vacuum Einstein space (see \cite{santos}).
\end{example}

\begin{example}\label{completo} From the hypothesis of Theorem \ref{theorem 6}, consider $\varphi(\xi)=ke^{\xi}$ and $N(\xi)=e^{\frac{n-2}{2}\xi}$ with $k \in \mathbb{R}^{*}_{+}$. Solving \eqref{ps} we obtain 
\begin{equation*}
    \psi(\xi)=\pm\frac{1}{\sqrt{2}}e^{\frac{n-2}{2}\xi}+c,
\end{equation*}
where $c$ is constant. Let us show that $M=\left(\mathbb{R}^n,\bar{g}\right)\times_{N}\mathbb{R}$ with metric tensor $g_{M}=\bar{g}-N^2dt^2$ is geodesically complete, for a specific choice of null vector and signature of the metric. 

It is well known (see \cite{Be} and \cite{O'neil}) that if the curve is a $\gamma(s)=\left(\sigma(s), v(s)\right)$, it is geodesic in $M=B\times_{N}F$ if, and only if,
\begin{itemize} 
    \item [(i)] $\sigma''(s)=g_{F}\left(v'(s),v'(s)\right)N(\sigma(s))\left(\nabla_{g_{B}}N\right)$ in $B$,
    \item [(ii)] $v''(s)=-\frac{2}{N(\sigma(s))}\frac{d\left(N\circ\sigma\right)}{ds}v'(s)$ in $F$.
\end{itemize}
 
Note that 
\begin{equation*}
    \nabla_{\bar{g}}N=\frac{(n-2)k^2}{2}e^{\frac{n+2}{2}\xi}\sum_{i=1}^n\varepsilon_i\alpha_i\frac{\partial}{\partial x_i}.
\end{equation*}
By $(i)$, we get
\begin{equation*}
    \sigma''(s)=\frac{(n-2)k^2}{2}\left(v'(s)\right)^2e^{n\xi}\sum_{i=1}^n\varepsilon_i\alpha_i\frac{\partial}{\partial x_i}.
\end{equation*}
From the structure of the warped product we have assumed, we can infer that $\sigma(s)=\left(x_1(s),...,x_n(s)\right)$ and $v(s)=x_{n+1}(s)$. Thus,  
\begin{equation*}
x_{i}''(s)=\varepsilon_i\alpha_i\frac{(n-2)k^2}{2}\left(v'(s)\right)^2e^{n\xi}\hspace{0.5cm}\forall \hspace{0.5cm}1\leq i\leq n.
\end{equation*}

Now, from $(ii)$ we have 
\begin{equation*}
    v''(s)=-(n-2)\sum_{i=1}^n\alpha_ix_{i}'(s)v'.
\end{equation*}
Making $\varepsilon_1=-1$, $\varepsilon_i=1$ for $i\neq1$ and $\alpha_1=\alpha_2=1$, $\alpha_i=0$ with $3\leq i\leq n$. Thus $\sum_{i=1}^{n}\varepsilon_i\alpha_{i}^2=0$ and $\xi=x_1+x_2$. Consequently, 
\begin{equation}\label{sis}
\left\{\begin{array}{lcl}
x_{1}''(s)=-\frac{(n-2)k^2}{2}\left(v'(s)\right)^2e^{n\xi},\\

x_{2}''(s)=\frac{(n-2)k^2}{2}\left(v'(s)\right)^2e^{n\xi}\\ 

x_{l}''(s)=0, \hspace{0.5cm} for\hspace{0.5cm} 3\leq l\leq n,\\

v''(s)=-(n-2)\left(x_1'(s)+x_2'(s)\right)v'
\end{array}\right.
\end{equation}
From the first and second equations of the above system, we have 
\begin{equation}\label{var}
    x_1'(s)+x_2'(s)=k_1 \hspace{0.2cm} \Longrightarrow \hspace{0.2cm}x_1(s)+x_2(s)=k_1s+k_2,
\end{equation}
where $k_1, k_2\in \mathbb{R}$. Thus, from the fourth equation of \eqref{sis} and \eqref{var} we obtain

\begin{equation*}
    \frac{v''(s)}{v'(s)}=-(n-2)k_1 \Longrightarrow v'(s)=\bar{k}_3e^{-(n-2)k_1s},
\end{equation*}
where $\bar{k}_3$ is a non null constant. Therefore, 
\begin{equation}\label{v}
\left\{\begin{array}{lcl}
v(s)=-\frac{\bar{k}_3}{(n-2)k_1}e^{-(n-2)k_1s}+k_4,\hspace{0.2cm}if\hspace{0.2cm} k_1\neq0\hspace{0.2cm} and \hspace{0.2cm} k_4\in\mathbb{R}, \\
v(s)=k_3s+k_5, \hspace{0.2cm}if\hspace{0.2cm}k_1=0 \hspace{0.2cm} and \hspace{0.2cm} k_3,\,k_5\in \mathbb{R},\\ 
x_l(s)=k_6s+k_7,\hspace{0.2cm}k_6, k_7\in \mathbb{R}, \hspace{0.2cm} 3\leq l\leq n.
\end{array}\right.
\end{equation}

Moreover, from the first equation of \eqref{sis} and \eqref{var}, we get 
\begin{equation}\label{x}
x_1''(s)=-\frac{(n-2)k^2}{2}\left(v'(s)\right)^2e^{n(k_1s+k_2)}.
\end{equation}
Thus, for $k_1\neq 0$ we get 
\begin{equation*}
x_1(s)=
\left\{\begin{array}{lcl}
-\frac{(n-2)k^2\bar{k}_3^2}{2(4-n)^2k_1^2}e^{nk_2}e^{(4-n)k_1s}+k_8s+k_9,\hspace{0.2cm} for \hspace{0.2cm} n\neq 4\hspace{0.2cm} and \hspace{0.2cm}k_8, k_9\in\mathbb{R}, \\
-\frac{k^2\bar{k}_3^2}{2}e^{4k_2}s^2+k_{10}s+k_{11}, for \hspace{0.2cm} n=4\hspace{0.2cm} and \hspace{0.2cm}k_{10}, k_{11}\in\mathbb{R}.
\end{array}\right.
\end{equation*}

On the other hand, if $k_1=0$, from the second equation of \eqref{v} and \eqref{x}, we have 
\begin{equation*}
x_1(s)=-\frac{(n-2)k^2k_3^2}{4}e^{nk_2}s^2+k_{12}s+k_{13},
\end{equation*}
where $k_{12}$ and $k_{13} \in \mathbb{R}$. Furthermore, if $k_3 = 0$, we get $x_1(s)$ is linear.

However, since $x_1(s)+x_2(s)=k_1s+k_2$, we can see that, in any case, all the geodesics are
defined for the entire real line, which means that the warped product manifold $M=\left(\mathbb{R}^n,\bar{g}\right)\times_{N}\mathbb{R}$ with metric tensor $g_{M}=\bar{g}-N^2dt^2$, where $N(\xi)=e^{\frac{n-2}{2}\xi}$ and $\bar{g}=k^{-2}e^{-2\xi}\varepsilon_i\delta_{ij}$, is geodesically complete.  Moreover, the sectional curvature (cf. Appendix) is given by 
\begin{eqnarray*}
K_{ij}=k^2e^{2\xi}\left(\varepsilon_j\alpha_j^2+\varepsilon_i\alpha_i^2\right).\end{eqnarray*}
\end{example}

\section{Proof of the main results}\label{proofs}

Let us start this section with the following Lemma, which will be fundamental in the proof of the main result (Theorem \ref{cor1}). This result deals with a necessary condition which must be satisfied by functions $\varphi$, $N$ and $\psi$ for $(\Omega,\,\bar{g},\,N,\,\psi)$, $\Omega\subseteq\mathbb{R}^{n}$ is an open subset, where $\bar{g}=\frac{\varepsilon_i \delta_{ij}}{\varphi^{2}}$, be a solution for the electrostatic system (Definition \ref{def1}).

\begin{lemma}\label{lemma 1}
	Let $(\mathbb{R}^n,g)$, $n\geq 3$ be a pseudo-Euclidean space with Cartesian coordinates $(x_1,..., x_n)$ and metric components $g_{ij}=\delta_{ij}\varepsilon_i$, $1\leq i,j \leq n$, where $\varepsilon_i=\pm 1$. Consider smooth functions $\varphi(\xi), \psi(\xi)$ and $N(\xi)$. If there exists a metric $\bar{g}=g/\varphi^{2}$ such that $\left(\Omega,\,\bar{g},\,N,\,\psi\right)$, $\Omega\subseteq\mathbb{R}^{n}$, is a solution for the electrostatic system, then the functions must satisfy
	\begin{equation}\label{23}
2(n-1)N \varphi''-n(n-1)\frac{N}{\varphi} (\varphi')^2 - 2\frac{\varphi}{N} (\psi')^2 =0.
	\end{equation}
\end{lemma}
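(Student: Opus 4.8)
The plan is to reduce the partial differential system of Theorem~\ref{teo1} to the ordinary differential system of Theorem~\ref{teo2}, and then to eliminate the $N$-derivatives $N'$ and $N''$ by a purely algebraic combination of three of those equations. Since $\varphi,\psi,N$ depend on $\xi=\sum_i\alpha_i x_i$ alone, the chain rule gives $\varphi_{,i}=\alpha_i\varphi'$ and $\varphi_{,ij}=\alpha_i\alpha_j\varphi''$ (and similarly for $N$ and $\psi$), with $\sum_k\varepsilon_k\alpha_k^2=\varepsilon_{i_0}$. Feeding these into \eqref{004}, \eqref{005} and \eqref{018} and cancelling the common nonzero factor $\varepsilon_{i_0}$ reproduces exactly \eqref{3}, \eqref{4} and \eqref{21}; equivalently, one may simply invoke Theorem~\ref{teo2} in the non-null case relevant here. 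These three relations are the only input needed.

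First I would isolate $N''$ from \eqref{21}, namely
\begin{equation*}
N''=(n-2)\frac{\varphi'N'}{\varphi}+\widehat{C}^2\frac{(\psi')^2}{N},
\end{equation*}
and insert it into \eqref{3}. The two $N'$-terms then collapse to $-n\varphi'N'$, and, recalling $\widehat{C}^2=2(n-2)/(n-1)$, the two $(\psi')^2$-terms combine with coefficient $2-\widehat{C}^2=2/(n-1)$. This yields the intermediate relation
\begin{equation*}
(n-2)N\varphi''-n\varphi'N'+\frac{2\varphi}{(n-1)N}(\psi')^2=0,
\end{equation*}
which no longer contains $N''$.

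It then remains to remove $\varphi'N'$. From \eqref{4} I would solve for
\begin{equation*}
\varphi'N'=-N\varphi''+(n-1)\frac{N}{\varphi}(\varphi')^2+\frac{2\varphi}{(n-1)N}(\psi')^2,
\end{equation*}
and substitute this into the intermediate relation. The coefficients of $N\varphi''$ then add up to $(n-2)+n=2(n-1)$, the $(\varphi')^2$-contribution becomes $-n(n-1)\frac{N}{\varphi}(\varphi')^2$, and the $(\psi')^2$-terms, carrying coefficient $\frac{2}{n-1}-\frac{2n}{n-1}=-2$, collapse to $-2\frac{\varphi}{N}(\psi')^2$. This is precisely \eqref{23}.

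The whole argument is a linear elimination of $N'$ and $N''$ from the three ODEs \eqref{3}, \eqref{4} and \eqref{21}, so no genuine obstacle arises; the only step requiring care is the bookkeeping of the $(\psi')^2$ coefficients, where the specific value $\widehat{C}^2=2(n-2)/(n-1)$ is exactly what forces the successive cancellations $2-\widehat{C}^2=\frac{2}{n-1}$ and $\frac{2}{n-1}-\frac{2n}{n-1}=-2$ to produce the stated identity.
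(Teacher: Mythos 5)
Your elimination argument is correct: granting $\varepsilon_{i_0}\neq 0$, solving \eqref{21} for $N''$, inserting it into \eqref{3} (which collapses the $N'$-terms to $-n\varphi'N'$ and the $(\psi')^2$-terms to coefficient $2-\widehat{C}^{2}=\tfrac{2}{n-1}$), and then eliminating $\varphi'N'$ via \eqref{4} does produce exactly $2(n-1)N\varphi''-n(n-1)\tfrac{N}{\varphi}(\varphi')^2-2\tfrac{\varphi}{N}(\psi')^2=0$; I checked the three coefficient bookkeeping steps and they are right, and there is no circularity since the paper's proofs of Theorems \ref{teo1} and \ref{teo2} do not rely on Lemma \ref{lemma 1}. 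However, your route is genuinely different from the paper's. The paper does not touch the reduced ODE system at all: it starts from the traced identity \eqref{1.4}, $N^2R_{\bar g}=2|\nabla_{\bar g}\psi|^2$ (obtained by contracting \eqref{003} and using \eqref{001}), substitutes K\"uhnel's conformal formula $R_{\bar g}=(n-1)\left(2\varphi\Delta_g\varphi-n|\nabla_g\varphi|^2\right)$ together with $|\nabla_{\bar g}\psi|^2=\varphi^2\sum_k\varepsilon_k(\psi_{,k})^2$, and only then specializes to functions of $\xi$. That derivation identifies \eqref{23} geometrically as the scalar-curvature constraint of the system and is shorter; yours stays entirely at the level of the ODEs of Theorem \ref{teo2} and makes explicit that \eqref{23} is a linear consequence of \eqref{3}, \eqref{4}, \eqref{21} alone (equation \eqref{20} is never needed), with the precise value $\widehat{C}^{2}=2(n-2)/(n-1)$ forcing the cancellations. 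One point in your favor: both arguments really need $\varepsilon_{i_0}\neq0$ (in the lightlike case \eqref{4} and \eqref{21} are vacuous, and indeed \eqref{23} generally fails for the solutions of Theorem \ref{theorem 6}); you state this restriction explicitly, whereas the paper's proof ends with the factor $\varepsilon_{i_0}$ still multiplying the bracket and silently discards it.
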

\begin{proof}
From Equation \eqref{1.4} we have $ N^2R_{\bar{g}} = 2 |\nabla_{\bar{g}} \psi|^2$, where $R_{\bar{g}}$ and $\nabla_{\bar{g}}$ denote, respectively, the scalar curvature and covariant derivative for $\bar{g}$.
It is well known that the scalar curvature (cf. \cite{khunel1988}) is given by
\begin{eqnarray*}
 R _ {\bar{g}} = (n-1) \left( 2 \varphi \Delta_{g} \varphi -n | \nabla_{g} \varphi|^{2} \right). 
 \end{eqnarray*}
 Moreover, if $F$ is any smooth function on $(\Omega, \, \bar{g})$ we have
 \begin{eqnarray*}
 | \nabla_{\bar{g}} F |^2= \varphi^2 \sum_{k=1}^{n} \varepsilon_{k} (F_{,k})^2.
 \end{eqnarray*}
 Therefore, from \eqref{1.4} and the above equation we get
\begin{eqnarray*}
 N ^ 2 (n-1) [2 \varphi \Delta_{g} \varphi-n | \nabla_{g} \varphi|^{2}) = 2 | \nabla_{\bar{g}} \psi|^2.
 \end{eqnarray*}

In coordinates, we can infer that
	\begin{equation*}
	\sum_{k=1}^{n}\varepsilon_k\left[2(n-1) N\varphi_{,kk} -n(n-1)\frac{N}{\varphi}(\varphi_{,k})^2 - 2\frac{\varphi}{N} (\psi_{,k})^2\right]=0.
	\end{equation*}
Assuming that $\varphi(\xi)$ and $\psi(\xi)$ are functions of $\xi$ we get
\begin{equation*}
\varphi_{,k}=\varphi'\alpha_{i},\hspace{0,5cm}\varphi_{,ii}=\varphi''\alpha_{i}^{2} \hspace{0,5cm}  and \hspace{0,5cm} \psi_{,i}=\psi'\alpha_{i}.
\end{equation*}

Consequently,
\begin{equation*}
\varepsilon_{i_0}\left[2(n-1)N \varphi''-n(n-1)\frac{N}{\varphi} (\varphi')^2 - 2\frac{\varphi}{N} (\psi')^2\right]=0,
\end{equation*}
where $\varepsilon_{i_0}=\displaystyle\sum_{i}\varepsilon_{i}\alpha_{i}^{2}.$
\end{proof}

\begin{proof}[Proof of  Theorem \ref{teo1}] Let us remember the Ricci and scalar curvature for a conformal metric in the form $\bar{g}=\frac{g}{\varphi^{2}}$ (cf. \cite{khunel1988}):

\begin{align*}
Ric_{\bar{g}}=\frac{1}{\varphi^{2}} \{(n-2)\varphi \, \nabla^2_{g}\varphi+[\varphi\Delta_{g}\varphi-(n-1)|\nabla_{g}\varphi|^{2}]g \}
\quad\mbox{and}\quad
R_{\bar{g}}=(n-1)\left(2\varphi\Delta_{g}\varphi-n|\nabla_{g}\varphi|^{2}\right).
\end{align*}

By equation \eqref{003}, 
\begin{align*}
N\operatorname{Ric}_{\bar{g}} =\nabla_{\bar{g}}^2 N-2\frac{\nabla_{\bar{g}}\psi\otimes \nabla_{\bar{g}}\psi}{N}+\frac{2}{(n-1)N}|\nabla_{\bar{g}}\psi|^2\bar{g}
\end{align*}
is equivalent to
\begin{eqnarray}\label{A}
&&(n-2)N\varphi (\nabla^2_{g} \varphi)_{ij} +N[\varphi\Delta_{g}\varphi-(n-1)|\nabla_{g}\varphi|^{2}]\delta_{ij}\varepsilon_{i}\nonumber\\
&&=\varphi^2 (\nabla^2_{\bar{g}} N)_{ij}+\frac{2}{(n-1)N}|\nabla_{\bar{g}}\psi|^2\varepsilon_{i}\delta_{ij} -\frac{2\varphi^2}{N}\nabla_{\bar{g}}\psi\otimes \nabla_{\bar{g}}\psi.
\end{eqnarray}

The Hessian tensor for $\bar{g}$ is given by
\begin{align*}
(\nabla^2_{\bar{g}} N)_{ij}=N_{,ij}-\sum_{k=1}^{n}\bar{\Gamma}_{ij}^{k}N_{,k}
\end{align*}
where $\bar{\Gamma}_{ij}^{k}$ are the Christoffel symbols of the metric $\bar{g}$. For a distinct $i,j,k$, we have
\begin{align}\label{cristofel}
\bar{\Gamma}_{ij}^{k}=0,\hspace{0.5cm}\bar{\Gamma}_{ij}^{i}=-\frac{\varphi_{,j}}{\varphi},\hspace{0.5cm}\bar{\Gamma}_{ii}^{k}=\varepsilon_{i}\varepsilon_{k}\frac{\varphi_{,k}}{\varphi},\hspace{0.5cm}\bar{\Gamma}_{ii}^{i}=-\frac{\varphi_{,i}}{\varphi}.
\end{align}
Thus,

\begin{align}\label{10}
(\nabla^2_{\bar{g}} N)_{ij}=
N_{,ij}+\frac{\varphi_{,j}N_{,i}}{\varphi}+\frac{\varphi_{,i}N_{,j}}{\varphi}, \hspace{0,5cm} \mathrm{if} \hspace{0,5cm} i\neq j,
\end{align}
and
\begin{align}\label{11}
(\nabla^2_{\bar{g}} N)_{ii}=
N_{,ii}+ 2\frac{\varphi_{,i}N_{,i}}{\varphi}- \dfrac{\varepsilon_{i}}{\varphi} \sum_{k=1}^{n}\varepsilon_{k} \varphi_{,k}N_{,k}, \hspace{0,5cm}
\end{align}

Moreover, we note that
\begin{equation}\label{14}
|\nabla_{g}\varphi|^{2}=\sum_{k=1}^{n}\varepsilon_{k}\left(\varphi_{,k}\right)^{2}, \hspace{0.5cm} \Delta_{g}\varphi=\sum_{k=1}^{n}\varepsilon_{k}\varphi_{,kk},
\end{equation}
and
\begin{equation}\label{014}
(\nabla^2_{g} \varphi)_{ij}=\varphi_{,ij}, \hspace{0.5cm} (\nabla_{\bar{g}}\psi\otimes \nabla_{\bar{g}}\psi)_{ij}=\psi_{,i}\psi_{,j}.
\end{equation}

If $i\neq j$ in \eqref{A}, we obtain 
\begin{equation}\label{15}
(n-2)N (\nabla^2_{g} \varphi)_{ij}-\varphi (\nabla^2_{\bar{g}} N)_{ij}+2\frac{\varphi}{N}\psi_{,i}\psi_{,j}=0.
\end{equation}
Substituting the expressions \eqref{10} and \eqref{014} into \eqref{15} in the above PDE, we get
\begin{equation*}
(n-2)N\varphi_{,ij}-\varphi N_{,ij}-\varphi_{,j}N_{,i}-\varphi_{,i}N_{,j}+2\frac{\varphi}{N}\psi_{,i}\psi_{,j}=0.
\end{equation*}

Furthermore, if $i=j$ in \eqref{A}, from \eqref{014} we have

\begin{eqnarray}\label{13}
&&(n-2)N\varphi (\nabla^2_{g} \varphi)_{ii}+N[\varphi\Delta_{g}\varphi-(n-1)|\nabla_{g}\varphi|^{2}]\varepsilon_{i}\nonumber\\
&&=\varphi^2 (\nabla^2_{\bar{g}} N)_{ii}+\frac{2}{(n-1)N}|\nabla_{\bar{g}} \psi|^2 \varepsilon_i-\frac{2\varphi^2}{N}(\psi_{,i})^{2}.
\end{eqnarray}
Hence, combining \eqref{11}, \eqref{14} and \eqref{014} with \eqref{13} we obtain
\begin{equation*}
\varphi\left[(n-2)N\varphi_{,ii}-\varphi N_{,ii}-2\varphi_{,i}N_{,i}+2\frac{\varphi}{N}\left(\psi_{,i}\right)^2 \right]-\varepsilon_{i}(n-1)N\sum_{k=1}^{n}\varepsilon_{k}\left(\varphi_{,k}\right)^2+
\end{equation*}
\begin{equation*}
\varepsilon_{i}\sum_{k=1}^{n}\varepsilon_{k}\left[\varphi\varphi_{,kk}N+\varphi\varphi_{,k}N_{,k}-\frac{2\varphi^2}{(n-1)N}\left(\psi_{,k}\right)^2\right]=0.
\end{equation*}
 
Besides that, by Definition \ref{def1} we have $div_{\bar{g}}\left(\frac{\nabla_{\bar{g}} \psi}{N}\right)=0$ (Equation \ref{002}),
 which is equivalent to
 \begin{equation}\label{16}
 N\Delta_{\bar{g}}\psi-\bar{g}\left(\nabla_{\bar{g}}\psi,\nabla_{\bar{g}}N\right)=0.
 \end{equation}
 By definition, we get
\begin{equation*}
\bar{g}\left(\nabla_{\bar{g}}\psi,\nabla_{\bar{g}}N\right)=\bar{g}\left(\sum_{i,j}\bar{g}^{ij}\psi_{,i}\partial_j,\,\sum_{k,l}\bar{g}^{kl}N_{,k}\partial_l\right).
\end{equation*}
Since
\begin{equation*}
\frac{g}{\varphi^{2}}\left(\sum_{i,j}\varphi^{2}\varepsilon_i\delta_{ij}\psi_{,i}\partial_j,\sum_{k,l}\varphi^{2}\varepsilon_k\delta_{kl}N_{,k}\partial_l\right)=\varphi^2\sum_{i,k}\varepsilon_i  \delta_{ik}\psi_{,i}N_{,k}
\end{equation*}
we can conclude that
\begin{equation}\label{18}
\bar{g}\left(\nabla_{\bar{g}}\psi,\nabla_{\bar{g}}N\right)=\varphi^2\sum_{i}\varepsilon_i\psi_{,i}N_{,i}.
\end{equation}

The Laplacian with respect to $\bar{g}$ for any smooth function $F:\Omega\rightarrow\mathbb{R}$ is given by
\begin{eqnarray*}
\Delta_{\bar{g}}F = \sum_{i=1}^{n} \varphi^2 \varepsilon_{i} F_{,ii} - (n-2) \varphi \varphi_{,i} F_{,i}.
\end{eqnarray*}
Then, from \eqref{16} and \eqref{18} we have 
\begin{equation*}
\sum_{k=1}^{n}\varepsilon_k\{N\varphi \psi_{,kk}-(n-2)N\varphi_{,k}\psi_{,k}-\varphi\psi_{,k}N_{,k}\}=0.
\end{equation*}

Finally, by expression \eqref{001}, we get
\begin{equation*}
N\Delta_{\bar{g}} N =\widehat C^2|\nabla_{\bar{g}}\psi|^2,
\end{equation*}
consequently,
\begin{equation*}
\sum_{k=1}^{n}\varepsilon_k\{\varphi NN_{,kk} -(n-2)N \varphi_{,k}N_{,k}-\widehat C^2\varphi \left(\psi_{,k}\right)^2\}=0.
\end{equation*}
Thus, we conclude the proof of this theorem.
\end{proof}

\begin{proof}[Proof of  Theorem \ref{teo2}]
	Now, we assume that $\psi(\xi)$, $\varphi(\xi)$ and $N(\xi)$ are functions of $\xi$. The derivatives for any smooth function $F:=F(\xi)$ are given by
	\begin{equation*}
	F_{,i}=F'\alpha_{i},\hspace{0,2cm}F_{,ii}=F''\alpha_{i}^{2},\hspace{0,2cm} F_{,ij}=F''\alpha_{i}\alpha_{j}.
	\end{equation*}

	Then, by equation \eqref{004} in Theorem \ref{teo1} we obtain
	\begin{equation*}
	(n-2)N\varphi''\alpha_{i}\alpha_{j}-\varphi N''\alpha_{i}\alpha_{j}-\varphi'\alpha_{i}N'\alpha_{j}-\varphi'\alpha_{j}N'\alpha_{i}+\frac{2\varphi}{N}\psi'\alpha_i\psi'\alpha_j=0, \hspace{0,2cm} i\neq j,
	\end{equation*}
	which is equivalent to 
	\begin{equation*}
	\left[(n-2)N\varphi''-\varphi N''-2\varphi'N'+\frac{2\varphi}{N}\left(\psi'\right)^2\right]\alpha_{i}\alpha_{j}=0. \hspace{0,2cm}
	\end{equation*}
	So, for $\alpha_{i}\alpha_{j}\neq 0$, we have
	\begin{equation*}
	(n-2)N\varphi''-\varphi N''-2\varphi'N'+\frac{2\varphi}{N}\left(\psi'\right)^2=0.
	\end{equation*}
	
	Similarly, considering equation \eqref{005} from Theorem \ref{teo1}, we get the following:
	\begin{equation*}
	\varphi\left[(n-2)N\varphi''-\varphi N''-2\varphi'N'+\frac{2\varphi}{N}\left(\psi'\right)^2\right] \alpha_{i}^{2}+
	\end{equation*}
	\begin{equation*}
	\varepsilon_{i}\sum_{k=1}^{n}\varepsilon_{k}\alpha_k^2\left\{N\varphi\varphi''-(n-1)N\left(\varphi'\right)^2+\varphi\varphi'N'-\frac{2\varphi^2}{(n-1)N}\left(\psi'\right)^2\right\}=0.
	\end{equation*}
	It should be noted that $(n-2)N\varphi''-\varphi N''-2\varphi'N'+\frac{2\varphi}{N}\left(\psi'\right)^2=0$. Therefore,
	\begin{equation*}
	\varepsilon_{i_0}\left\{N\varphi\varphi''-(n-1)N\left(\varphi'\right)^2+\varphi\varphi'N'-\frac{2\varphi^2}{(n-1)N}\left(\psi'\right)^2\right\}=0.
	\end{equation*}
	
	Again, by expression \eqref{19} of Theorem \ref{teo1} we can infer that
	\begin{equation*}
	\varepsilon_{i_0}\{\varphi N\psi''-(n-2)\varphi'N\psi'-\varphi N'\psi'\}=0.
	\end{equation*}

	If, for all $i\neq j$, we have $\alpha_{i}\alpha_{j}=0$ then, without loss of generality, we can consider $\xi=x_{i_{0}}$. Thus, equation \eqref{004} is trivially satisfied for $i\neq j$.  Considering \eqref{005}, for all $i\neq i_0$ we obtain the following equation: 
	\begin{equation*}
	\varepsilon_{i_0}\left[\varphi\varphi_{,i_0i_0}N-(n-1)N\left(\varphi_{,i_0}\right)^2+\varphi\varphi_{,i_0}N_{,i_0}-\frac{2\varphi^2}{(n-1)N}\left(\psi_{,i_0}\right)^2\right]=0.
	\end{equation*}
	Therefore,
	\begin{equation*}
	\varepsilon_{i_0}\left\{N\varphi\varphi''-(n-1)N\left(\varphi'\right)^2+\varphi\varphi'N'-\frac{2\varphi^2}{(n-1)N}\left(\psi'\right)^2\right\}=0,
	\end{equation*}
	and thus, equation \eqref{4} is satisfied. 
	
	Considering $i=i_{0}$ in \eqref{005}, we obtain
	\begin{eqnarray*}
	&&\varphi\left[(n-2)N\varphi''-\varphi N''-2\varphi'N'+\frac{2\varphi}{N}\left(\psi'\right)^2\right] \nonumber\\
	&+&\left[N\varphi\varphi''-(n-1)N\left(\varphi'\right)^2+\varphi\varphi'N'-\frac{2\varphi^2}{(n-1)N}\left(\psi'\right)^2\right]=0.
	\end{eqnarray*}
	Combining the above equation with \eqref{4}, we obtain 
	\begin{equation*}
	(n-2)N\varphi''-\varphi N''-2\varphi'N'+\frac{2\varphi}{N}\left(\psi'\right)^2=0.
	\end{equation*}
	
Similarly, considering equation \eqref{19}, we get
	\begin{equation*}
	\varepsilon_{i_0}\{N\varphi \psi_{,i_0i_0}-(n-2)N\varphi_{,\varepsilon_{i_0}}\psi_{,i_0}-\varphi\psi_{,i_0}N_{,i_0}\}=0,
	\end{equation*}
which gives \eqref{20}.
	Moreover, by Equation \eqref{018} we have
	\eqref{21}.

Finally, it is easy to check that for $\sum_{i=1}^n\varepsilon_i\alpha_i^2=0$ (lightlike) the only possibility left is \eqref{3}.
\end{proof}

\begin{proof}[Proof of Theorem \ref{theorem 6}]
It follows immediately from Theorem \ref{teo2}.
\end{proof}

\begin{proof}[Proof of Theorem \ref{theorem 7}] 
 First, combining equations \eqref{4} and \eqref{23} we have 
 \begin{equation*}
     -\varphi \varphi' N'+ N \varphi \varphi'' - N (\varphi')^2= 0 \quad\Longrightarrow\quad -\frac{N'}{N} + \frac{\varphi''}{\varphi'} - \frac{\varphi'}{\varphi} =0.
 \end{equation*}
 Considering that $N$ and $\varphi$ are functions of $\xi$, integrating the above expression, we obtain $$N= k \frac{\varphi'}{\varphi},$$ with $k\in \mathbb{R}\backslash\{0\}$.

Now, from \eqref{20} we obtain
\begin{equation*}
    \frac{\psi''}{\psi'}= (n-2)\frac{\varphi'}{\varphi} + \frac{N'}{N}.
\end{equation*}
Using the fact that $N= k \frac{\varphi'}{\varphi}$, the above equation becomes  
\begin{equation*}
     \frac{\psi''}{\psi'}= (n-3)\frac{\varphi'}{\varphi} + \frac{\varphi''}{\varphi'}.
\end{equation*}
So $\psi' = k_1 \varphi' \varphi^{n-3}$ where $k_1 \in \mathbb{R}\backslash\{0\}$.
Therefore, 
\begin{equation*}
    \psi= k_1 \frac{\varphi^{n-2}}{n-2} + k_2,
   \end{equation*}
  where $k_2\in\mathbb{R}$.

Considering the equation  \eqref{3}, we have 
\begin{equation*}
    \frac{n-2}{n-1} \varphi \varphi'' N - \frac{\varphi^2 N''}{n-1} - \frac{2}{n-1} \varphi \varphi' N' + \frac{2}{(n-1) \, N} \varphi^2  (\psi')^2 = 0.
\end{equation*}
Then, combining \eqref{4} with the expression obtained previously, we get
\begin{equation}\label{*}
    \left( \frac{2n-3}{n-1} \right) \varphi''N + \left( \frac{n-3}{n-1} \right) \varphi' N' - (n-1)\frac{N}{\varphi} (\varphi')^2 - \frac{\varphi N''}{n-1 }=0.
\end{equation}

On the other hand, by equation \eqref{21}, we have 
\begin{equation} \label{21*}
        -\varphi N''+(n-2)\varphi'N'+ \left( \frac{n-2}{n-1} \right) \, \frac{2\varphi}{N}\left(\psi'\right)^2=0.
    \end{equation}
    Now, multiplying the expression \eqref{23} by $\frac{n-2}{n-1}$, we obtain 
\begin{equation}\label{23*}
    2(n-2) N \varphi'' - n(n-2) \frac{N}{\varphi} (\varphi')^2 - \left( \frac{n-2}{n-1} \right) \, \frac{2\varphi}{N}\left(\psi'\right)^2=0.
\end{equation}

Therefore, from \eqref{21*} and \eqref{23*} it follows that
\begin{equation}\label{**}
    -\varphi N''+(n-2)\varphi'N'+ 2(n-2) N \varphi'' - n(n-2) \frac{N}{\varphi} (\varphi')^2 =0.
\end{equation}
Combining \eqref{*} and \eqref{**} we can see that

\begin{eqnarray}\label{***}
\left(\frac{2n-3}{n-1} +2(n-2)\right) N\varphi'' + \left((n-2)+\frac{n-3}{n-1}\right)  \varphi'N'\nonumber\\ 
- [n-1+ n(n-2)] \frac{N}{\varphi} (\varphi')^2 - \left(1+\frac{1}{n-1} \right)\varphi N''= 0.
\end{eqnarray}
 Again, from the fact that $N= k \frac{\varphi'}{\varphi}$, we can infer that     
\begin{equation*}
    N' = k \left(  \frac{\varphi''}{\varphi} - \left( \frac{\varphi'}{\varphi
    } \right) ^2 \right)  \hspace{0,5cm} \mathrm{and} \hspace{0,5cm} N''= k \left(  \frac{\varphi'''}{\varphi} - \frac{3\varphi' \varphi''}{\varphi ^2
    }  + 2\left( \frac{\varphi'}{\varphi
    } \right) ^3 \right).
\end{equation*}
Substituting the expressions of $N$, $N'$ and $N''$ in \eqref{***}, we get
    \begin{equation*}
        \frac{-n}{n-1} \varphi''' + 3n \frac{\varphi' \varphi''}{\varphi} - n^2 \frac{(\varphi')^3}{\varphi ^2}=0
    \end{equation*}
        Therefore, 
    
      \begin{equation}\label{****}
         \frac{\varphi'''}{\varphi}  - 3(n-1) \frac{\varphi' \varphi''}{\varphi^2}+ n(n-1) \left( \frac{\varphi'}{\varphi} \right)^3 = 0.
    \end{equation}
    This finishes the proof for this theorem. The next theorem, focuses on solving the above ODE.
      \end{proof}

    \begin{proof}[Proof of  Theorem \ref{cor1}]
    
    Now, consider the following change of variable $f = \varphi'/ \varphi$ in the last theorem. We can see that this choice is related to the function $N$. This trick allows us to integrate explicitly the conformal function $\varphi$. So, 
    \begin{equation*}
     \frac{\varphi''}{\varphi} = f' + f^2  \hspace{0,5cm} \mathrm{and} \hspace{0,5cm}   \frac{\varphi'''}{\varphi}= f''+3ff'+f^3.
\end{equation*}
Now replacing the identities above in the expression \eqref{****} we get the following ODE in $f$.
\begin{equation}\label{edo_f}
        f''-3(n-2)ff'+(n-2)^2f^3=0.
    \end{equation}

Considering $v(f)=f'$, we get 
\begin{equation*}
       \frac{d^2f(\xi)}{d \xi^2}= \frac{dv(f)}{df} \frac{df}{d\xi} = v(f)v'(f).
\end{equation*}
Therefore, rewriting \eqref{edo_f} and dividing by $v(f)$, it gives us
\begin{equation}\label{chini_equation}
    v'= 3(n-2)f- (n-2)^2f^{3}v^{-1}.
\end{equation}
Note that the above ODE is from the Chini Equation family:  $v'= F(f)v^m+G(f)v+H(f)$ (see Equation 1.55, page 303 in \cite{kamke}). This type of ODE generalizes the Riccati and Abel equations. It is known that if the Chini invariant 
\begin{eqnarray*}
\alpha = F(f)^{-m-1}H(f)^{-2m+1}((F(f)H'(f) - F'(f)H(f) + mF(f)G(f)H(f))^m m^{(-m)}
\end{eqnarray*}
is independent of $f$, the substitution 
\begin{eqnarray*}
v(f)= \pm u(f)\left( \frac{H(f)}{F(f)} \right)^{1/m}
\end{eqnarray*}
leads to an ODE of separable variables.

From \eqref{chini_equation}, where $m = -1$,  we get $\alpha = \frac{-9}{2}$. So, $\alpha$ is independent of $f$, then 
\begin{equation}\label{v(f)}
    v(f) = \mp (n-2)f^2u(f),
\end{equation} 
where $u(f):=\frac{u(f)}{3}.$
Moreover,
$ v'= \mp(n-2)f \left( 2u + fu'\right).$ Hence,  rewriting \eqref{chini_equation} with  the above identities we obtain
\begin{equation*}
    \frac{udu}{(u\pm1)(u\pm1/2)}=\frac{du}{u \pm 1}-\frac{du}{2u\pm1}= \frac{- df}{f}.
\end{equation*}
Thus,
   \begin{eqnarray*}
   \ln\left|\dfrac{u\pm1}{\sqrt{2u\pm1}}\right|=\ln|f^{-1}|+c_1;\quad c_1\in\mathbb{R}.
   \end{eqnarray*}
Therefore, 
\begin{equation*}
u^2 \pm 2u(1\mp c_{1}^{2}f^{-2}) + (1\mp c_{1}^{2}f^{-2}) =0
\end{equation*}
where $c_1\neq0$ is an arbitrary constant.
Then, 
\begin{eqnarray*}
u(f) = (c_{1}^{2}f^{-2}\mp 1) \pm c_1 f^{-1}\sqrt{c_{1}^{2}f^{-2}\mp 1}.
\end{eqnarray*}
From now on, it is very important to carefully follow the signs of the equations.

Substituting $u(f)$ in \eqref{v(f)} we have
\begin{equation*}
     \frac{ df}{(c_{1}^{2}\mp f^{2}) \pm c_1 \sqrt{c_{1}^{2}\mp f^{2}}}= \mp(n-2)d\xi.
\end{equation*}
Then, by an integration we obtain
\begin{eqnarray*}
 \frac{ \pm (c_1 \mp \sqrt{c_{1}^{2}\mp f^{2}})}{c_1 f}=\mp(n-2)\xi+c_2,
\end{eqnarray*}
where $c_2\in\mathbb{R}$. From the above equation we get
\begin{eqnarray*}
\frac{\varphi'}{\varphi}=f= \frac{2c_{1}^{2}[c_2 \mp (n-2)\xi]}{1\pm c_{1}^{2}[c_2\mp(n-2)\xi]^2}.
\end{eqnarray*}

A simple integration yields to 
\begin{eqnarray}\label{phir}
\varphi(\xi)=c_3[1\pm c_{1}^{2}[c_2\mp(n-2)\xi]^2]^{-1/(n-2)},
\end{eqnarray}
where $c_3$ is a non null constant.

Therefore, from \eqref{Npsidependephi} we can infer that
\begin{equation*}
   \psi(\xi)=\frac{k_1}{n-2}c_3^{n-2}[1\pm c_{1}^{2}[c_2\mp(n-2)\xi]^2]^{-1} + k_2
\end{equation*}

and

\begin{equation*}
     N\left(\xi\right)= \frac{2kc_{1}^{2}[c_2 \mp (n-2)\xi]}{1\pm c_{1}^{2}[c_2\mp(n-2)\xi]^2},
\end{equation*}
where $k_2,\,c_2,\, k\neq0,\,k_1\neq0,\,c_3\neq0,\, c_1\neq0 \in\mathbb{R}$.

 Now, to get the converse statement for this theorem we need to prove that $N$, $\psi$ and $\varphi$ are, indeed, solutions for Theorem \ref{teo2} and Lemma \ref{lemma 1}. A straightforward computation proves that $\varphi$ satisfies \eqref{edophi}. On the other hand, to get \eqref{23}, or the equations in Theorem \ref{teo2}, the following identity must be satisfied
 
 \begin{eqnarray}\label{ufa}
 c_1^2= \frac{\mp c_3^{2(n-2)}k_1^{2}}{2(n-1)(n-2)k^{2}}.
 \end{eqnarray}

In fact, from the equation in \eqref{23}, knowing that $N= k \frac{\varphi'}{\varphi}$  and $\psi' = k_1 \varphi' \varphi^{n-3}$, we obtain 
 
 \begin{equation}\label{23_varphi}
     2(n-1) \frac{\varphi' \varphi''}{\varphi}-n(n-1)\frac{(\varphi ')^3}{\varphi^2} - 2\frac{k_1^2}{k^2} \varphi' \varphi^{2(n-2)} =0.
 \end{equation}
 Considering \eqref{phir}, we have
 
 \begin{equation*}
     \varphi'(\xi) = \frac{2c_1^2 c_3(c_2 \mp (n-2)\xi)}{[1\pm c_{1}^{2}[c_2\mp(n-2)\xi]^2]^{(n-1)/(n-2)}}.
 \end{equation*}
 and 
  \begin{equation*}
     \varphi''(\xi) = \frac{2c_1^2 c_3 \{ nc_1^2(c_2 \mp (n-2)\xi)^2 \mp (n-2)\}}{[1\pm c_{1}^{2}[c_2\mp(n-2)\xi]^2]^{(2n-3)/(n-2)}}.
 \end{equation*}
Therefore, substituting the expressions above in \eqref{23_varphi}, from a simple computation we obtain the following identity:
\begin{eqnarray*}
\dfrac{8c_1^4 c_3 (c_2 \mp (n-2)\xi)} {[1\pm c_{1}^{2}[c_2\mp(n-2)\xi]^2]^{(3n-5)/(n-2)}} \left\{ \mp (n-1)(n-2)-\frac{k_1^2c_3^{2(n-2)}}{2 k^2 c_1^2} \right\} =0.
\end{eqnarray*}

Similarly, we can verify that $N$, $\psi$ and $\varphi$ satisfy the equations of Theorem \ref{teo2} only if \eqref{ufa} is satisfied. Hence, the only possibility for $\varphi,\,\psi$ and $N$ is the one provided in the statement of this theorem, i.e., \eqref{varphi}, \eqref{psi} and \eqref{N}, respectively. 
\end{proof}

\section{Appendix}

In this section, we will provide the formula for the sectional curvature for a conformal metric. Consider the metric $g_{ij}= \frac{\delta_{ij}\varepsilon_{i}}{\varphi^2}$ in $\Omega\subseteq\mathbb{R}^n$ a open subset, where $\varphi$ is a positive smooth function. We can write $g^{ij}= \varphi^2\delta_{ij}\varepsilon_{i}$ to indicate the inverse of the metric $g_{ij}$. In these conditions, we have:
\begin{eqnarray*}
 \frac{\partial g_{ik}}{\partial x_j} = \frac{-2 \delta_{ik} \varepsilon_i}{\varphi^3} \varphi_{,j}.  
\end{eqnarray*}

For this metric $g$, the curvature coefficients are
 \begin{equation*}
     R_{ijij}= \sum_{l} R_{iji}^l \, g_{lj} = R_{iji}^j \, \frac{\varepsilon_j}{\varphi^2} = \frac{\varepsilon_j}{\varphi^2} \left( \sum_l \Gamma_{ii}^{l} \Gamma_{jl}^{j} - \sum_l \Gamma_{ji}^{l} \Gamma_{il}^{j} + \frac{\partial}{\partial x_j}\Gamma_{ii}^{j} - \frac{\partial}{\partial x_i}\Gamma_{ji}^{j} \right).
 \end{equation*}
Then, from \eqref{cristofel} we can calculate the derivative of $\Gamma_{ij}^{k}$. That is, 
 \begin{equation*}
     \frac{\partial}{\partial x_j}\Gamma_{ii}^{j}= \varepsilon_i\varepsilon_j \left( \frac{\varphi_{,jj}}{\varphi} - \frac{\varphi_{,j} ^2 }{\varphi^2}\right) \hspace{0,5cm} \mathrm{and} \hspace{0,5cm} \frac{\partial}{\partial x_i}\Gamma_{ji}^{j}= \left( \frac{\varphi_{,i}}{\varphi} \right)^2  - \frac{\varphi_{,ii}}{\varphi}.
 \end{equation*}
Combining the above identities with \eqref{cristofel} we get
\begin{equation*}
    \varphi^2\varepsilon_j  R_{ijij}= - \sum_l \varepsilon_i\varepsilon_l \left( \frac{\varphi_{,l}}{\varphi} \right)^2 + \varepsilon_i\varepsilon_j \frac{\varphi_{,jj}}{\varphi} + \frac{\varphi_{,ii}}{\varphi}.
    \end{equation*}
 Now, if the four indices are distinct
 
 \begin{equation*}
     R_{ijkl}=  \sum_{s} R_{ijk}^s \, g_{sl} = R_{ijk}^l \, g_{ll} = 0.
 \end{equation*}
 
 Finally, consider the case in which we have three distinct indices:
 \begin{equation*}
      R_{ijk}^i= - \frac{\varphi_{,kj}}{\varphi}, \hspace{0,5cm}
      R_{ijk}^j= \frac{\varphi_{,ki}}{\varphi} \hspace{0,5cm} \mathrm{and} \hspace{0,5cm}      R_{ijk}^k= 0.
 \end{equation*}
Hence, the sectional curvature generated by $\partial_{x_i}$, $\partial_{x_j}$ is
 \begin{equation}\label{Kij}
     K_{ij} = \varphi^2\left(-\sum_l \varepsilon_i\varepsilon_l \left( \frac{\varphi_{,l}}{\varphi} \right)^2 + \varepsilon_i\varepsilon_j \frac{\varphi_{,jj}}{\varphi} + \frac{\varphi_{,ii}}{\varphi}\right)\varepsilon_i.
 \end{equation}

Let $\xi=\sum_{i=1}^{n}\alpha_ix_i$, consider $\varphi(\xi)$ a function of $\xi$. Since
\begin{equation*}
\varphi_{,i}=\varphi'\alpha_i, \hspace{0.5cm}\varphi_{,ii}=\varphi''\alpha_i^2, \hspace{0.5cm}\varphi_{,ij}=\varphi''\alpha_i\alpha_j,
\end{equation*}
from \eqref{Kij} we get 
 \begin{equation*}
     K_{ij} = -\left(\varphi'\right)^2\varepsilon_{i_0}  + \varphi\varphi''\varepsilon_j \alpha_j^2 + \varphi\varphi''\varepsilon_i\alpha_i^2,
 \end{equation*}
 where $\displaystyle\sum_{l=1}^{n} \varepsilon_{l}\alpha_{l}^{2}=\varepsilon_{i_0}\in\{-1,\,0,\,1\}$, which depends on the direction of the tangent vector field $\alpha=\displaystyle\sum_{l=1}^{n}\alpha_{l}\partial_{x_l}$.

\

\


\

\end{document}